\newcommand{\kett}[1]{|#1)}
\newcommand{\braa}[1]{(#1|}
\tikzstyle{bwSpider}=[
 \tikzstyle{wbSpider}=[
\tikzstyle{cWire}=[densely dotted, thick]
\tikzstyle{env}=[copoint,regular polygon rotate=0,minimum width=0.2cm, fill=black]
\tikzstyle{probs}=[shape=semicircle,fill=white,draw=black,shape border rotate=180,minimum width=1.2cm]
\tikzstyle{every picture}=[baseline=-0.25em,scale=0.5]
\tikzstyle{dotpic}=[] 
\tikzstyle{diredges}=[every to/.style={diredge}]
\tikzstyle{math matrix}=[matrix of math nodes,left delimiter=(,right delimiter=),inner sep=2pt,column sep=1em,row sep=0.5em,nodes={inner sep=0pt},text height=1.5ex, text depth=0.25ex]
\tikzstyle{inline text}=[text height=1.5ex, text depth=0.25ex,yshift=0.5mm]
\tikzstyle{label}=[font=\footnotesize,text height=1.5ex, text depth=0.25ex,yshift=0.5mm]
\tikzstyle{left label}=[label,anchor=east,xshift=1.5mm]
\tikzstyle{right label}=[label,anchor=west,xshift=-1.5mm]
\tikzstyle{braceedge}=[decorate,decoration={brace,amplitude=2mm,raise=-1mm}]
\tikzstyle{small braceedge}=[decorate,decoration={brace,amplitude=1mm,raise=-1mm}]
\tikzstyle{doubled}=[line width=1.6pt] 
\tikzstyle{boldedge}=[doubled,shorten <=-0.17mm,shorten >=-0.17mm]
\tikzstyle{boldedgegray}=[doubled,gray,shorten <=-0.17mm,shorten >=-0.17mm]
\tikzstyle{singleedgegray}=[gray]
\tikzstyle{semidoubled}=[line width=1.4pt] 
\tikzstyle{semiboldedgegray}=[semidoubled,gray,shorten <=-0.17mm,shorten >=-0.17mm]
\tikzstyle{boxedge}=[semiboldedgegray]
\tikzstyle{dottededge}=[dashed,shorten <=-0.02mm,shorten >=-0.02mm]
\tikzstyle{boldedgedashed}=[very thick,dashed,shorten <=-0.17mm,shorten >=-0.17mm]
\tikzstyle{vboldedgedashed}=[doubled,dashed,shorten <=-0.17mm,shorten >=-0.17mm]
\tikzstyle{left hook arrow}=[left hook-latex]
\tikzstyle{right hook arrow}=[right hook-latex]
\tikzstyle{sembracket}=[line width=0.5pt,shorten <=-0.07mm,shorten >=-0.07mm]
\tikzstyle{causal edge}=[->,thick,gray]
\tikzstyle{causal nondir}=[thick,gray]
\tikzstyle{timeline}=[thick,gray, dashed]
\tikzstyle{cedge}=[<->,thick,gray!70!white]
\tikzstyle{empty diagram}=[draw=gray!40!white,dashed,shape=rectangle,minimum width=1cm,minimum height=1cm]
\tikzstyle{empty diagram small}=[draw=gray!50!white,dashed,shape=rectangle,minimum width=0.6cm,minimum height=0.5cm]
\tikzstyle{dot}=[inner sep=0mm,minimum width=2mm,minimum height=2mm,draw,shape=circle]
\tikzstyle{phase dot}=[pdot,phase dimensions]
\tikzstyle{wphase dot}=[dot, phase dimensions]
\tikzstyle{leak}=[white dot, shape=regular polygon, minimum size=300mm, regular polygon sides=3, outer sep=-0.2mm, regular polygon rotate=270]
\tikzstyle{preleak}=[trapezium, trapezium angle=67.5, draw, inner sep=0.1pt, outer sep=0pt, minimum height=2mm, minimum width=4pt,rotate=270]
\tikzstyle{proj}=[white dot, shape=regular polygon, minimum size=3.3 mm, regular polygon sides=4, outer sep=-0.2mm]
\tikzstyle{Vleak}=[white dot, shape=regular polygon, minimum size=3.3 mm, regular polygon sides=3, outer sep=-0.2mm, regular polygon rotate=90]
\tikzstyle{dleak}=[white dot, line width=1.6pt, shape=regular polygon, minimum size=3.3 mm, regular polygon sides=3, outer sep=-0.2mm, regular polygon rotate=270]
\tikzstyle{Wsquare}=[white dot, shape=regular polygon, rounded corners=0.8 mm, minimum size=3.3 mm, regular polygon sides=3, outer sep=-0.2mm]
\tikzstyle{Wsquareadj}=[white dot, shape=regular polygon, rounded corners=0.8 mm, minimum size=3.3 mm, regular polygon sides=3, outer sep=-0.2mm, regular polygon rotate=180]
\tikzstyle{ddot}=[inner sep=0mm, doubled, minimum width=2.5mm,minimum height=2.5mm,draw,shape=circle]
\tikzstyle{black dot}=[dot,fill=black]
\tikzstyle{white dot}=[dot,fill=white,,text depth=-0.2mm]
\tikzstyle{white Wsquare}=[Wsquare,fill=gray,,text depth=-0.2mm]
\tikzstyle{white Wsquareadj}=[Wsquareadj,fill=white,,text depth=-0.2mm]
\tikzstyle{green dot}=[white dot] 
\tikzstyle{gray dot}=[dot,fill=gray!40!white,,text depth=-0.2mm]
\tikzstyle{red dot}=[gray dot] 
\tikzstyle{black ddot}=[ddot,fill=black]
\tikzstyle{white ddot}=[ddot,fill=white]
\tikzstyle{gray ddot}=[ddot,fill=gray!40!white]
\tikzstyle{gray edge}=[gray!60!white]
\tikzstyle{small dot}=[inner sep=0.5mm,minimum width=0pt,minimum height=0pt,draw,shape=circle]
\tikzstyle{small black dot}=[small dot,fill=black]
\tikzstyle{small white dot}=[small dot,fill=white]
\tikzstyle{small gray dot}=[small dot,fill=gray!40!white]
\tikzstyle{causal dot}=[inner sep=0.4mm,minimum width=0pt,minimum height=0pt,draw=white,shape=circle,fill=gray!40!white]
\tikzstyle{phase dimensions}=[minimum size=5mm,font=\footnotesize,rectangle,rounded corners=2.5mm,inner sep=0.2mm,outer sep=-2mm]
\tikzstyle{dphase dimensions}=[minimum size=5mm,font=\footnotesize,rectangle,rounded corners=2.5mm,inner sep=0.2mm,outer sep=-2mm]
\tikzstyle{white phase dot}=[dot,fill=white,phase dimensions]
\tikzstyle{white phase ddot}=[ddot,fill=white,dphase dimensions]
\tikzstyle{white rect ddot}=[draw=black,fill=white,doubled,minimum size=5mm,font=\footnotesize,rectangle,rounded corners=2.5mm,inner sep=0.2mm]
\tikzstyle{gray rect ddot}=[draw=black,fill=gray!40!white,doubled,minimum size=6mm,font=\footnotesize,rectangle,rounded corners=3mm]
\tikzstyle{gray phase dot}=[dot,fill=gray!40!white,phase dimensions]
\tikzstyle{gray phase ddot}=[ddot,fill=gray!40!white,dphase dimensions]
\tikzstyle{grey phase dot}=[gray phase dot]
\tikzstyle{grey phase ddot}=[gray phase ddot]
\tikzstyle{small phase dimensions}=[minimum size=4mm,font=\tiny,rectangle,rounded corners=2mm,inner sep=0.2mm,outer sep=-2mm]
\tikzstyle{small dphase dimensions}=[minimum size=4mm,font=\tiny,rectangle,rounded corners=2mm,inner sep=0.2mm,outer sep=-2mm]
\tikzstyle{small gray phase dot}=[dot,fill=gray!40!white,small phase dimensions]
\tikzstyle{small gray phase ddot}=[ddot,fill=gray!40!white,small dphase dimensions]
\tikzstyle{small map}=[draw,shape=rectangle,minimum height=4mm,minimum width=4mm,fill=white]
\tikzstyle{cnot}=[fill=white,shape=circle,inner sep=-1.4pt]
\tikzstyle{asym hadamard}=[fill=white,draw,shape=NEbox,inner sep=0.6mm,font=\footnotesize,minimum height=4mm]
\tikzstyle{asym hadamard conj}=[fill=white,draw,shape=NWbox,inner sep=0.6mm,font=\footnotesize,minimum height=4mm]
\tikzstyle{asym hadamard dag}=[fill=white,draw,shape=SEbox,inner sep=0.6mm,font=\footnotesize,minimum height=4mm]
\tikzstyle{hadamard}=[fill=white,draw,inner sep=0.6mm,font=\footnotesize,minimum height=4mm,minimum width=4mm]
\tikzstyle{small hadamard}=[fill=white,draw,inner sep=0.6mm,minimum height=1.5mm,minimum width=1.5mm]
\tikzstyle{small hadamard rotate}=[small hadamard,rotate=45]
\tikzstyle{dhadamard}=[hadamard,doubled]
\tikzstyle{small dhadamard}=[small hadamard,doubled]
\tikzstyle{small dhadamard rotate}=[small hadamard rotate,doubled]
\tikzstyle{antipode}=[white dot,inner sep=0.3mm,font=\footnotesize]
\tikzstyle{scalar}=[diamond,draw,inner sep=0.5pt,font=\small]
\tikzstyle{dscalar}=[diamond,doubled, draw,inner sep=0.5pt,font=\small]
\tikzstyle{small box}=[rectangle,inline text,fill=white,draw,minimum height=5mm,yshift=-0.5mm,minimum width=5mm,font=\small]
\tikzstyle{small gray box}=[small box,fill=gray!30]
\tikzstyle{medium box}=[rectangle,inline text,fill=white,draw,minimum height=5mm,yshift=-0.5mm,minimum width=10mm,font=\small]
\tikzstyle{square box}=[small box] 
\tikzstyle{medium gray box}=[small box,fill=gray!30]
\tikzstyle{semilarge box}=[rectangle,inline text,fill=white,draw,minimum height=5mm,yshift=-0.5mm,minimum width=12.5mm,font=\small]
\tikzstyle{large box}=[rectangle,inline text,fill=white,draw,minimum height=5mm,yshift=-0.5mm,minimum width=15mm,font=\small]
\tikzstyle{large gray box}=[small box,fill=gray!30]
\tikzstyle{Bayes box}=[rectangle,fill=black,draw, minimum height=3mm, minimum width=3mm]
\tikzstyle{gray square point}=[small box,fill=gray!50]
\tikzstyle{dphase box white}=[dhadamard]
\tikzstyle{dphase box gray}=[dhadamard,fill=gray!50!white]
\tikzstyle{phase box white}=[hadamard]
\tikzstyle{phase box gray}=[hadamard,fill=gray!50!white]
\tikzstyle{point}=[regular polygon,regular polygon sides=3,draw,scale=0.75,inner sep=-0.5pt,minimum width=9mm,fill=white,regular polygon rotate=180]
\tikzstyle{point nosep}=[regular polygon,regular polygon sides=3,draw,scale=0.75,inner sep=-2pt,minimum width=9mm,fill=white,regular polygon rotate=180]
\tikzstyle{copoint}=[regular polygon,regular polygon sides=3,draw,scale=0.75,inner sep=-0.5pt,minimum width=9mm,fill=white]
\tikzstyle{dpoint}=[point,doubled]
\tikzstyle{dcopoint}=[copoint,doubled]
\tikzstyle{pointgrow}=[shape=cornerpoint,kpoint common,scale=0.75,inner sep=3pt]
\tikzstyle{pointgrow dag}=[shape=cornercopoint,kpoint common,scale=0.75,inner sep=3pt]
\tikzstyle{wide copoint}=[fill=white,draw,shape=isosceles triangle,shape border rotate=90,isosceles triangle stretches=true,inner sep=0pt,minimum width=1.5cm,minimum height=6.12mm]
\tikzstyle{wide point}=[fill=white,draw,shape=isosceles triangle,shape border rotate=-90,isosceles triangle stretches=true,inner sep=0pt,minimum width=1.5cm,minimum height=6.12mm,yshift=-0.0mm]
\tikzstyle{wide point plus}=[fill=white,draw,shape=isosceles triangle,shape border rotate=-90,isosceles triangle stretches=true,inner sep=0pt,minimum width=1.74cm,minimum height=7mm,yshift=-0.0mm]
\tikzstyle{wide dpoint}=[fill=white,doubled,draw,shape=isosceles triangle,shape border rotate=-90,isosceles triangle stretches=true,inner sep=0pt,minimum width=1.5cm,minimum height=6.12mm,yshift=-0.0mm]
\tikzstyle{tinypoint}=[regular polygon,regular polygon sides=3,draw,scale=0.55,inner sep=-0.15pt,minimum width=6mm,fill=white,regular polygon rotate=180]
\tikzstyle{white point}=[point]
\tikzstyle{white dpoint}=[dpoint]
\tikzstyle{green point}=[white point] 
\tikzstyle{white copoint}=[copoint]
\tikzstyle{gray point}=[point,fill=gray!40!white]
\tikzstyle{gray dpoint}=[gray point,doubled]
\tikzstyle{red point}=[gray point] 
\tikzstyle{gray copoint}=[copoint,fill=gray!40!white]
\tikzstyle{gray dcopoint}=[gray copoint,doubled]
\tikzstyle{white point guide}=[regular polygon,regular polygon sides=3,font=\scriptsize,draw,scale=0.65,inner sep=-0.5pt,minimum width=9mm,fill=white,regular polygon rotate=180]
\tikzstyle{black point}=[point,fill=black,font=\color{white}]
\tikzstyle{black copoint}=[copoint,fill=black,font=\color{white}]
\tikzstyle{tiny gray point}=[tinypoint,fill=gray!40!white]
\tikzstyle{diredge}=[->]
\tikzstyle{ddiredge}=[<->]
\tikzstyle{rdiredge}=[<-]
\tikzstyle{thickdiredge}=[->, very thick]
\tikzstyle{pointer edge}=[->,very thick,gray]
\tikzstyle{pointer edge part}=[very thick,gray]
\tikzstyle{dashed edge}=[dashed]
\tikzstyle{thick dashed edge}=[very thick,dashed]
\tikzstyle{thick gray dashed edge}=[thick dashed edge,gray!40]
\tikzstyle{thick map edge}=[very thick,|->]
\newcommand{\boxshape}[3]{%
\pgfdeclareshape{#1}{
\inheritsavedanchors[from=rectangle] 
\inheritanchorborder[from=rectangle]
\inheritanchor[from=rectangle]{center}
\inheritanchor[from=rectangle]{north}
\inheritanchor[from=rectangle]{south}
\inheritanchor[from=rectangle]{west}
\inheritanchor[from=rectangle]{east}
\backgroundpath{
\southwest \pgf@xa=\pgf@x \pgf@ya=\pgf@y
\northeast \pgf@xb=\pgf@x \pgf@yb=\pgf@y

\@tempdima=#2
\@tempdimb=#3

\pgfpathmoveto{\pgfpoint{\pgf@xa - 5pt + \@tempdima}{\pgf@ya}}
\pgfpathlineto{\pgfpoint{\pgf@xa - 5pt - \@tempdima}{\pgf@yb}}
\pgfpathlineto{\pgfpoint{\pgf@xb + 5pt + \@tempdimb}{\pgf@yb}}
\pgfpathlineto{\pgfpoint{\pgf@xb + 5pt - \@tempdimb}{\pgf@ya}}
\pgfpathlineto{\pgfpoint{\pgf@xa - 5pt + \@tempdima}{\pgf@ya}}
\pgfpathclose
}
}}
\tikzstyle{cloud}=[shape=cloud,draw,minimum width=1.5cm,minimum height=1.5cm]
\tikzstyle{map}=[draw,shape=NEbox,inner sep=2pt,minimum height=6mm,fill=white]
\tikzstyle{dashedmap}=[draw,dashed,shape=NEbox,inner sep=2pt,minimum height=6mm,fill=white]
\tikzstyle{mapdag}=[draw,shape=SEbox,inner sep=2pt,minimum height=6mm,fill=white]
\tikzstyle{mapadj}=[draw,shape=SEbox,inner sep=2pt,minimum height=6mm,fill=white]
\tikzstyle{maptrans}=[draw,shape=SWbox,inner sep=2pt,minimum height=6mm,fill=white]
\tikzstyle{mapconj}=[draw,shape=NWbox,inner sep=2pt,minimum height=6mm,fill=white]
\tikzstyle{medium map}=[draw,shape=NEbox,inner sep=2pt,minimum height=6mm,fill=white,minimum width=7mm]
\tikzstyle{medium map dag}=[draw,shape=SEbox,inner sep=2pt,minimum height=6mm,fill=white,minimum width=7mm]
\tikzstyle{medium map adj}=[draw,shape=SEbox,inner sep=2pt,minimum height=6mm,fill=white,minimum width=7mm]
\tikzstyle{medium map trans}=[draw,shape=SWbox,inner sep=2pt,minimum height=6mm,fill=white,minimum width=7mm]
\tikzstyle{medium map conj}=[draw,shape=NWbox,inner sep=2pt,minimum height=6mm,fill=white,minimum width=7mm]
\tikzstyle{semilarge map}=[draw,shape=NEbox,inner sep=2pt,minimum height=6mm,fill=white,minimum width=9.5mm]
\tikzstyle{semilarge map trans}=[draw,shape=SWbox,inner sep=2pt,minimum height=6mm,fill=white,minimum width=9.5mm]
\tikzstyle{semilarge map adj}=[draw,shape=SEbox,inner sep=2pt,minimum height=6mm,fill=white,minimum width=9.5mm]
\tikzstyle{semilarge map dag}=[draw,shape=SEbox,inner sep=2pt,minimum height=6mm,fill=white,minimum width=9.5mm]
\tikzstyle{semilarge map conj}=[draw,shape=NWbox,inner sep=2pt,minimum height=6mm,fill=white,minimum width=9.5mm]
\tikzstyle{large map}=[draw,shape=NEbox,inner sep=2pt,minimum height=6mm,fill=white,minimum width=12mm]
\tikzstyle{large map conj}=[draw,shape=NWbox,inner sep=2pt,minimum height=6mm,fill=white,minimum width=12mm]
\tikzstyle{very large map}=[draw,shape=NEbox,inner sep=2pt,minimum height=6mm,fill=white,minimum width=17mm]
\tikzstyle{medium dmap}=[draw,doubled,shape=NEbox,inner sep=2pt,minimum height=6mm,fill=white,minimum width=7mm]
\tikzstyle{medium dmap dag}=[draw,doubled,shape=SEbox,inner sep=2pt,minimum height=6mm,fill=white,minimum width=7mm]
\tikzstyle{medium dmap adj}=[draw,doubled,shape=SEbox,inner sep=2pt,minimum height=6mm,fill=white,minimum width=7mm]
\tikzstyle{medium dmap trans}=[draw,doubled,shape=SWbox,inner sep=2pt,minimum height=6mm,fill=white,minimum width=7mm]
\tikzstyle{medium dmap conj}=[draw,doubled,shape=NWbox,inner sep=2pt,minimum height=6mm,fill=white,minimum width=7mm]
\tikzstyle{semilarge dmap}=[draw,doubled,shape=NEbox,inner sep=2pt,minimum height=6mm,fill=white,minimum width=9.5mm]
\tikzstyle{semilarge dmap trans}=[draw,doubled,shape=SWbox,inner sep=2pt,minimum height=6mm,fill=white,minimum width=9.5mm]
\tikzstyle{semilarge dmap adj}=[draw,doubled,shape=SEbox,inner sep=2pt,minimum height=6mm,fill=white,minimum width=9.5mm]
\tikzstyle{semilarge dmap dag}=[draw,doubled,shape=SEbox,inner sep=2pt,minimum height=6mm,fill=white,minimum width=9.5mm]
\tikzstyle{semilarge dmap conj}=[draw,doubled,shape=NWbox,inner sep=2pt,minimum height=6mm,fill=white,minimum width=9.5mm]
\tikzstyle{large dmap}=[draw,doubled,shape=NEbox,inner sep=2pt,minimum height=6mm,fill=white,minimum width=12mm]
\tikzstyle{large dmap conj}=[draw,doubled,shape=NWbox,inner sep=2pt,minimum height=6mm,fill=white,minimum width=12mm]
\tikzstyle{large dmap trans}=[draw,doubled,shape=SWbox,inner sep=2pt,minimum height=6mm,fill=white,minimum width=12mm]
\tikzstyle{large dmap adj}=[draw,doubled,shape=SEbox,inner sep=2pt,minimum height=6mm,fill=white,minimum width=12mm]
\tikzstyle{large dmap dag}=[draw,doubled,shape=SEbox,inner sep=2pt,minimum height=6mm,fill=white,minimum width=12mm]
\tikzstyle{very large dmap}=[draw,doubled,shape=NEbox,inner sep=2pt,minimum height=6mm,fill=white,minimum width=19.5mm]
\tikzstyle{muxbox}=[draw,shape=rectangle,minimum height=3mm,minimum width=3mm,fill=white]
\tikzstyle{dmuxbox}=[muxbox,doubled]
\tikzstyle{box}=[draw,shape=rectangle,inner sep=2pt,minimum height=6mm,minimum width=6mm,fill=white]
\tikzstyle{dbox}=[draw,doubled,shape=rectangle,inner sep=2pt,minimum height=6mm,minimum width=6mm,fill=white]
\tikzstyle{dmap}=[draw,doubled,shape=NEbox,inner sep=2pt,minimum height=6mm,fill=white]
\tikzstyle{dmapdag}=[draw,doubled,shape=SEbox,inner sep=2pt,minimum height=6mm,fill=white]
\tikzstyle{dmapadj}=[draw,doubled,shape=SEbox,inner sep=2pt,minimum height=6mm,fill=white]
\tikzstyle{dmaptrans}=[draw,doubled,shape=SWbox,inner sep=2pt,minimum height=6mm,fill=white]
\tikzstyle{dmapconj}=[draw,doubled,shape=NWbox,inner sep=2pt,minimum height=6mm,fill=white]
\tikzstyle{ddmap}=[draw,doubled,dashed,shape=NEbox,inner sep=2pt,minimum height=6mm,fill=white]
\tikzstyle{ddmapdag}=[draw,doubled,dashed,shape=SEbox,inner sep=2pt,minimum height=6mm,fill=white]
\tikzstyle{ddmapadj}=[draw,doubled,dashed,shape=SEbox,inner sep=2pt,minimum height=6mm,fill=white]
\tikzstyle{ddmaptrans}=[draw,doubled,dashed,shape=SWbox,inner sep=2pt,minimum height=6mm,fill=white]
\tikzstyle{ddmapconj}=[draw,doubled,dashed,shape=NWbox,inner sep=2pt,minimum height=6mm,fill=white]
\tikzstyle{smap}=[draw,shape=sNEbox,fill=white]
\tikzstyle{smapdag}=[draw,shape=sSEbox,fill=white]
\tikzstyle{smapadj}=[draw,shape=sSEbox,fill=white]
\tikzstyle{smaptrans}=[draw,shape=sSWbox,fill=white]
\tikzstyle{smapconj}=[draw,shape=sNWbox,fill=white]
\tikzstyle{dsmap}=[draw,dashed,shape=sNEbox,fill=white]
\tikzstyle{dsmapdag}=[draw,dashed,shape=sSEbox,fill=white]
\tikzstyle{dsmaptrans}=[draw,dashed,shape=sSWbox,fill=white]
\tikzstyle{dsmapconj}=[draw,dashed,shape=sNWbox,fill=white]
\tikzstyle{mmap}=[draw,shape=mNEbox]
\tikzstyle{mmapdag}=[draw,shape=mSEbox]
\tikzstyle{mmaptrans}=[draw,shape=mSWbox]
\tikzstyle{mmapconj}=[draw,shape=mNWbox]
\tikzstyle{mmapgray}=[draw,fill=gray!40!white,shape=mNEbox]
\tikzstyle{smapgray}=[draw,fill=gray!40!white,shape=sNEbox]
\pgfmathsetmacro{\pgf@shorten@left}{\pgfkeysvalueof{/tikz/shorten left}}
\pgfmathsetmacro{\pgf@shorten@right}{\pgfkeysvalueof{/tikz/shorten right}}
\pgfmathsetmacro{\pgf@shorten@left}{\pgfkeysvalueof{/tikz/shorten left}}
\pgfmathsetmacro{\pgf@shorten@right}{\pgfkeysvalueof{/tikz/shorten right}}
\tikzstyle{kpoint common}=[draw,fill=white,inner sep=1pt,minimum height=4mm]
\tikzstyle{kpoint sc}=[shape=cornerpoint,kpoint common]
\tikzstyle{kpoint adjoint sc}=[shape=cornercopoint,kpoint common]
\tikzstyle{kpoint}=[shape=cornerpoint,shorten left=5pt,kpoint common]
\tikzstyle{kpoint adjoint}=[shape=cornercopoint,shorten left=5pt,kpoint common]
\tikzstyle{kpoint conjugate}=[shape=cornerpoint,shorten right=5pt,kpoint common]
\tikzstyle{kpoint transpose}=[shape=cornercopoint,shorten right=5pt,kpoint common]
\tikzstyle{kpoint symm}=[shape=cornerpoint,shorten left=5pt,shorten right=5pt,kpoint common]
\tikzstyle{wide kpoint sc}=[shape=cornerpoint,kpoint common, minimum width=1 cm]
\tikzstyle{wide kpointdag sc}=[shape=cornercopoint,kpoint common, minimum width=1 cm]
\tikzstyle{black kpoint}=[shape=cornerpoint,shorten left=5pt,kpoint common,fill=black,font=\color{white}]
\tikzstyle{black kpoint sm}=[shape=cornerpoint,shorten left=5pt,kpoint common,fill=black,font=\color{white},scale=0.75]
\tikzstyle{black kpoint adjoint}=[shape=cornercopoint,shorten left=5pt,kpoint common,fill=black,font=\color{white}]
\tikzstyle{black kpointadj}=[shape=cornercopoint,shorten left=5pt,kpoint common,fill=black,font=\color{white}]
\tikzstyle{black kpointadj sm}=[shape=cornercopoint,shorten left=5pt,kpoint common,fill=black,font=\color{white},scale=0.75]
\tikzstyle{black dkpoint}=[shape=cornerpoint,shorten left=5pt,kpoint common,fill=black, doubled,font=\color{white}]
\tikzstyle{black dkpoint adjoint}=[shape=cornercopoint,shorten left=5pt,kpoint common,fill=black, doubled,font=\color{white}]
\tikzstyle{black dkpointadj}=[shape=cornercopoint,shorten left=5pt,kpoint common,fill=black, doubled,font=\color{white}]
\tikzstyle{black dkpoint sm}=[shape=cornerpoint,shorten left=5pt,kpoint common,fill=black, doubled,font=\color{white},scale=0.75]
\tikzstyle{black dkpointadj sm}=[shape=cornercopoint,shorten left=5pt,kpoint common,fill=black, doubled,font=\color{white},scale=0.75]
\tikzstyle{kpointdag}=[kpoint adjoint]
\tikzstyle{kpointadj}=[kpoint adjoint]
\tikzstyle{kpointconj}=[kpoint conjugate]
\tikzstyle{kpointtrans}=[kpoint transpose]
\tikzstyle{big kpoint}=[kpoint, minimum width=1.2 cm, minimum height=8mm, inner sep=4pt, text depth=3mm]
\tikzstyle{wide kpoint}=[kpoint, minimum width=1 cm, inner sep=2pt]
\tikzstyle{wide kpointdag}=[kpointdag, minimum width=1 cm, inner sep=2pt]
\tikzstyle{wide kpointconj}=[kpointconj, minimum width=1 cm, inner sep=2pt]
\tikzstyle{wide kpointtrans}=[kpointtrans, minimum width=1 cm, inner sep=2pt]
\tikzstyle{wider kpoint}=[kpoint, minimum width=1.25 cm, inner sep=2pt]
\tikzstyle{wider kpointdag}=[kpointdag, minimum width=1.25 cm, inner sep=2pt]
\tikzstyle{wider kpointconj}=[kpointconj, minimum width=1.25 cm, inner sep=2pt]
\tikzstyle{wider kpointtrans}=[kpointtrans, minimum width=1.25 cm, inner sep=2pt]
\tikzstyle{gray kpoint}=[kpoint,fill=gray!50!white]
\tikzstyle{gray kpointdag}=[kpointdag,fill=gray!50!white]
\tikzstyle{gray kpointadj}=[kpointadj,fill=gray!50!white]
\tikzstyle{gray kpointconj}=[kpointconj,fill=gray!50!white]
\tikzstyle{gray kpointtrans}=[kpointtrans,fill=gray!50!white]
\tikzstyle{gray dkpoint}=[kpoint,fill=gray!50!white,doubled]
\tikzstyle{gray dkpointdag}=[kpointdag,fill=gray!50!white,doubled]
\tikzstyle{gray dkpointadj}=[kpointadj,fill=gray!50!white,doubled]
\tikzstyle{gray dkpointconj}=[kpointconj,fill=gray!50!white,doubled]
\tikzstyle{gray dkpointtrans}=[kpointtrans,fill=gray!50!white,doubled]
\tikzstyle{white label}=[draw,fill=white,rectangle,inner sep=0.7 mm]
\tikzstyle{gray label}=[draw,fill=gray!50!white,rectangle,inner sep=0.7 mm]
\tikzstyle{black label}=[draw,fill=black,rectangle,inner sep=0.7 mm]
\tikzstyle{dkpoint}=[kpoint,doubled]
\tikzstyle{wide dkpoint}=[wide kpoint,doubled]
\tikzstyle{dkpointdag}=[kpoint adjoint,doubled]
\tikzstyle{wide dkpointdag}=[wide kpointdag,doubled]
\tikzstyle{dkcopoint}=[kpoint adjoint,doubled]
\tikzstyle{dkpointadj}=[kpoint adjoint,doubled]
\tikzstyle{dkpointconj}=[kpoint conjugate,doubled]
\tikzstyle{dkpointtrans}=[kpoint transpose,doubled]
\tikzstyle{kscalar}=[kpoint common, shape=EBox, inner xsep=-1pt, inner ysep=3pt,font=\small]
\tikzstyle{kscalarconj}=[kpoint common, shape=WBox, inner xsep=-1pt, inner ysep=3pt,font=\small]
\tikzstyle{spekpoint}=[kpoint sc,minimum height=5mm,inner sep=3pt]
\tikzstyle{spekcopoint}=[kpoint adjoint sc,minimum height=5mm,inner sep=3pt]
\tikzstyle{dspekpoint}=[spekpoint,doubled]
\tikzstyle{dspekcopoint}=[spekcopoint,doubled]
 \tikzstyle{upground}=[circuit ee IEC,thick,ground,rotate=90,scale=2.5]
 \tikzstyle{downground}=[circuit ee IEC,thick,ground,rotate=-90,scale=2.5]
 \tikzstyle{bigground}=[regular polygon,regular polygon sides=3,draw=gray,scale=0.50,inner sep=-0.5pt,minimum width=10mm,fill=gray]
\tikzstyle{arrs}=[-latex,font=\small,auto]
\tikzstyle{arrow plain}=[arrs]
\tikzstyle{arrow dashed}=[dashed,arrs]
\tikzstyle{arrow bold}=[very thick,arrs]
\tikzstyle{arrow hide}=[draw=white!0,-]
\tikzstyle{arrow reverse}=[latex-]
\tikzstyle{cdnode}=[]
\tikzstyle{discarding}=[fill=white, draw=black, shape=circle, style=upground]
\tikzstyle{smalldiscarding}=[fill=white, draw=black, style=upground, scale=0.5]
\tikzstyle{backdiscard}=[fill=white, draw=black, shape=circle, style=downground, scale=0.5]
\tikzstyle{smallbackdiscard}=[fill=white, draw=black, shape=circle, style=downground, scale=0.5]
\tikzstyle{state}=[fill=white, draw=black, style=triang, tikzit shape=rectangle]
\tikzstyle{kstate}=[fill=white, draw=black, style=kpoint, tikzit shape=rectangle]
\tikzstyle{kstateconj}=[fill=white, draw=black, style=kpoint conjugate, tikzit shape=rectangle]
\tikzstyle{kstateBIG}=[fill=white, draw=black, style=big kpoint, tikzit shape=rectangle]
\tikzstyle{effect}=[fill=white, draw=black, style=triangdag]
\tikzstyle{keffect}=[fill=white, draw=black, style=kpoint adjoint]
\tikzstyle{keffectconj}=[fill=white, draw=black, style=kpoint transpose]
\tikzstyle{morphdag}=[style=mapdag]
\tikzstyle{morph}=[style=hadamard]
\tikzstyle{WIDEmorph}=[style=hadamard, minimum width=14mm]
\tikzstyle{morphtrans}=[style=maptrans]
\tikzstyle{morphconj}=[style=mapconj]
\tikzstyle{CPMmorph}=[style=dmap]
\tikzstyle{CPMmorphconj}=[style=dmapconj]
\tikzstyle{CPMmorphdag}=[style=dmapdag]
\tikzstyle{CPMmorphtrans}=[style=dmaptrans]
\tikzstyle{CPMstate}=[fill=white, draw=black, style=triang, doubled]
\tikzstyle{CPMstateBIG}=[fill=white, draw=black, style={triang_lesssep}, doubled]
\tikzstyle{CPMkstate}=[fill=white, draw=black, style=kpoint, tikzit shape=rectangle, doubled]
\tikzstyle{CPMkstateconj}=[fill=white, draw=black, style=kpoint conjugate, tikzit shape=rectangle, doubled]
\tikzstyle{CPMkstateBIG}=[fill=white, draw=black, style=big kpoint, tikzit shape=rectangle, doubled]
\tikzstyle{CPMkeffect}=[fill=white, draw=black, style=kpoint adjoint, doubled]
\tikzstyle{CPMkeffectconj}=[fill=white, draw=black, style=kpoint transpose, doubled]
\tikzstyle{UHfB}=[fill=white, draw=black, style=triangdag, doubled, inner sep=-2pt]
\tikzstyle{leak}=[style=tinypoint, regular polygon rotate=-90]
\tikzstyle{leakfill}=[style=tinypoint, regular polygon rotate=-90, fill=black]
\tikzstyle{Z}=[style=dot, fill=green]
\tikzstyle{X}=[style=dot, fill=red]
\tikzstyle{black_dot}=[style=dot, fill=black]
\tikzstyle{white_dot}=[style=dot, fill=white]
\tikzstyle{qblack_dot}=[style=ddot, fill=black]
\tikzstyle{qwhite_dot}=[style=ddot, fill=white]
\tikzstyle{whitephase}=[style=wphase dot, fill=white]
\tikzstyle{qredphase}=[style=phase dot, fill=red]
\tikzstyle{qgreenphase}=[style=phase dot, fill=green]
\tikzstyle{had}=[style=hadamard, doubled]
\tikzstyle{box}=[style=hadamard]
\tikzstyle{classhad}=[style=hadamard]
\tikzstyle{antipode}=[style=anti]
\tikzstyle{dottededge}=[-, dotted]
\tikzstyle{double edge}=[-, style=doubled, draw=black, tikzit draw={rgb,255: red,234; green,209; blue,255}]
\tikzstyle{new edge style 0}=[<-]
\tikzstyle{new edge style 1}=[-, draw={rgb,255: red,234; green,209; blue,255}, fill={rgb,255: red,234; green,209; blue,255}]
\tikzstyle{new edge style 1b}=[-, draw={rgb,255: red,208; green,218; blue,216}, fill={rgb,255: red,208; green,218; blue,216}]
\tikzstyle{new edge style 2}=[-, draw={rgb,255: red,14; green,188; blue,83}]
\tikzstyle{new edge style 3}=[<-, draw={rgb,255: red,234; green,209; blue,255}]
\tikzstyle{new edge style 4}=[<-, draw={rgb,255: red,0; green,128; blue,128}]
\tikzstyle{new edge style 5}=[-, draw={rgb,255: red,214; green,110; blue,62}]
\tikzstyle{new edge style 6}=[-, draw={rgb,255: red,174; green,20; blue,174}]
\newcommand{\tikzfigscale}[2]{\scalebox{#1}{\tikzfig{#2}}}
\def\be{\begin{equation}}
\def\ee{\end{equation}}
\def\ba{\begin{align}}
\def\ea{\end{align}}
\newcommand{\ce}{\mathcal E}
\newcommand{\ci}{\mathcal I}
\newtheorem{definition}{Definition}
\newtheorem{theorem}{Theorem}
\newtheorem*{theorem*}{Theorem}
\newtheorem{lemma}{Lemma}
\newtheorem{example}{Example}
\tikzstyle{every picture}=[baseline=-0.25em,shorten <=-0.1pt]
\tikzstyle{dotpic}=[scale=0.5]
\tikzstyle{braceedge}=[decorate,decoration={brace,amplitude=1mm,raise=-1mm}]
\tikzstyle{dot}=[inner sep=0.7mm,minimum width=0pt,minimum height=0pt,fill=black,draw=black,shape=circle]
\tikzstyle{small dot}=[inner sep=0.1mm,minimum width=0pt,minimum height=0pt,fill=black,draw=black,shape=circle]
\tikzstyle{black dot}=[dot]
\tikzstyle{white dot}=[dot,fill=white]
\tikzstyle{gray dot}=[dot,fill=gray!40!white]
\tikzstyle{alt white dot}=[white dot,label={[xshift=3mm,yshift=-0.05mm,font=\tiny]left:$*$}]
\tikzstyle{alt gray dot}=[gray dot,label={[xshift=3mm,yshift=-0.05mm,font=\tiny]left:$*$}]
\tikzstyle{white norm}=[rectangle,fill=white,draw=black,minimum height=2mm,minimum width=2mm,inner sep=0pt,font=\small]
\tikzstyle{gray norm}=[white norm,fill=gray!40!white]
\tikzstyle{square box}=[rectangle,fill=white,draw=black,minimum height=5mm,minimum width=5mm,font=\small]
\tikzstyle{square gray box}=[rectangle,fill=gray!30,draw=black,minimum height=6mm,minimum width=6mm]
\tikzstyle{diredge}=[->]
\tikzstyle{rdiredge}=[<-]
\tikzstyle{dashed edge}=[dashed]
\tikzstyle{cross}=[preaction={draw=white, -, line width=3pt}]
\newcommand{\dotdualmult}[1]{%
\!\begin{tikzpicture}[dotpic]
    \node [style=white dot] (0) at (0, 0.3) {};
    \node [style=none] (1) at (-0.5, -0.4) {};
    \node [style=none] (2) at (0.5, -0.4) {};
    \node [style=none] (3) at (0, 0.8) {};
    \draw [style=diredge] (3.center) to (0);
    \draw [style=diredge, in=15, out=-30, looseness=1.50] (0) to (1.center);
    \draw [style=diredge, in=165, out=-150, looseness=1.50] (0) to (2.center);
\end{tikzpicture}\!}
\newcommand{\dotconorm}[1]{%
\,\begin{tikzpicture}[dotpic,yshift=0.4mm]
    \node [style=none] (0) at (0, -0.4) {};
    \node [style=white norm] (1) at (0, 0.1) {};
    \node [style=none] (2) at (0, 0.5) {};
    \draw [style=diredge] (1) to (0.center);
    \draw (2.center) to (1);
\end{tikzpicture}\,}
\DeclarePairedDelimiterX\ketbra[2]{\lvert}{\rvert}{#1\delimsize\rangle\!\delimsize\langle#2}
\DeclarePairedDelimiterX\projector[1]{\lvert}{\rvert}{#1\delimsize\rangle\!\delimsize\langle#1}%
\newcommand{\astfootnote}[1]{
\let\oldthefootnote=\thefootnote
\setcounter{footnote}{0}
\renewcommand{\thefootnote}{\fnsymbol{footnote}}
\footnote{#1}
\let\thefootnote=\oldthefootnote
}
\title{On the Origin of Linearity and Unitarity in Quantum Theory}
\author{Matt Wilson}
\affiliation{Universit\'{e} Paris-Saclay, CNRS, ENS Paris-Saclay, Inria, CentraleSup\'{e}lec, Laboratoire M\'{e}thodes Formelles}
\affiliation{PPLV Group, Department of Computer Science, University College London}
\email{matthew.wilson@centralesupelec.fr}
\author{Nick Ormrod}
\affiliation{Quantum Group, Computer Science Department, University of Oxford}
\email{nicholas.ormrod@cs.ox.ac.uk}
\def\HyPsd@expand@utfvii{}
\begin{document} \emergencystretch 3em

\maketitle

\begin{abstract}
We reconstruct the transformations of quantum theory using a physically motivated postulate. This postulate states that transformations should be \textit{locally applicable}, and recovers the linear isometries from pure quantum theory, as well as the completely positive, trace-preserving maps from mixed quantum theory. Notably, in the pure case, linearity with respect to the superposition rule and reversibility are both derived from this locality principle. 
\end{abstract}






\section{Introduction}
Why do pure quantum states evolve linearly and unitarily, and why do mixed states evolve according to completely positive trace-preserving maps? Here is a way to make this question more precise. Suppose one has a theory that describes the states of physical systems and measurements on them in the same way that quantum theory does. Is it necessary that this theory should treat transformations in the same way too? Or is there some reasonable modification of quantum theory that changes the transformations while leaving the rest intact \cite{WEINBERG1989336_testing_qm, first_schroddy_newter, Stamp_2012_correlated_worldline_1}?

Of course, the answer depends on what is meant by `reasonable'.
Here is a suggestion: a reasonable transformation on a system is one that is \textit{locally applicable}. By this we mean, roughly speaking, that one can imagine that the system is accompanied by some (possibly far away) environment, on which the transformation does not act. A hint that such a principle could be used to recover standard quantum transformations can be found in a pair of recent papers \cite{wilson2023quantum, wilson2022free}, in which a formalization of local applicability in a rather different context was used to re-characterise the quantum supermaps \cite{Chiribella2008TransformingSupermaps, Chiribella_2013_without_causal}, also known as process matrices \cite{Oreshkov_2012_process_matrices}.

\begin{figure}[H]
    \centering
    \tikzfig{figs/locality_pic_1_b} 
    \caption{The intuition behind local applicability. The system is separate from its environment; the locally applicable transformation $\mathscr{L}$ acts only on the system; and in particular it is independent of measurements performed on the environment\protect\footnotemark.}
    \label{fig:my_label}
\end{figure}
\footnotetext{One of the authors was proud to have managed to represent a measurement using a picture of a magnifying glass in TikZ; the other simply assumed that there was a glitch.}

In this paper, we provide an answer to this question, restricting ourselves to the finite-dimensional case for simplicity. Given the standard treatment of quantum states and measurements, the quantum dynamics can be derived from the postulate of local applicability alone. Conversely, given the standard treatment of quantum states and measurement, all modified versions of the quantum dynamics are not locally applicable. 

We begin by introducing the notion of a \textit{state-measurement theory} (Section \ref{sec:state-measurement}). This is a theory that describes states and measurements but not dynamics. We then show that given a \textit{quantum} state-measurement theory, the standard quantum dynamics are the only ones compatible with local applicability (Section \ref{sec:reconstruction}). We do this separately for the ``pure'' and ``mixed'' quantum state-measurement theories. The locally applicable transformations on the pure quantum state-measurement theory are precisely the unitary ones (or isometric in cases where input and output dimensions do not match). The locally applicable transformations on the mixed quantum state-measurement theory are precisely the quantum channels. 

After giving our main results, we compare them to some due to Gisin and later collaborators \cite{Gisin:1989sx_OG_nosig_paper, Simon_2001_gisin_CPTP} (Section \ref{sec:gisin}). In an influential paper \cite{Gisin:1989sx_OG_nosig_paper}, Gisin outlines an argument for the Schrödinger evolution from determinism, understood as the requirement that pure states are mapped to pure states, and compatibility with relativity, understood as a prohibition on superluminal signaling. From the perspective of this paper, the approach of \cite{Gisin:1989sx_OG_nosig_paper} has a few shortcomings. Firstly, it does not recover linearity with respect to the superposition rule for discrete time evolution. Secondly, it requires an additional assumption of complete positivity in the mixed setting. Thirdly, it is unclear how the approach in \cite{Gisin:1989sx_OG_nosig_paper} could be generalized to derive linearity for pure functions between distinct Hilbert spaces of possibly different dimensions. The derivation in this paper differs on all three counts.


The implications of our main results separate into two main categories. On the one hand, our results impose constraints on modifications to quantum theory. Specifically, they constitute no-go theorems for non-linear, non-unitary, or non-completely positive modifications to the quantum dynamics. This is relevant for proposed modifications that reject unitarity in order to avoid a measurement problem \cite{ghirardi1986unified, pearle1989combining, ghirardi1990markov, diosi1987universal, diosi1989models, penrose1996gravity, penrose2014gravitization, ghirardi1990relativistic, tumulka2006relativistic, ormrod2023theories}, and that aim to model interactions between quantum particles and classical gravitational fields \cite{WEINBERG1989336_testing_qm, weinberg_precision_tests, first_schroddy_newter, Carlip_2008_non_qm_gravity, rembieli_quasi_convex, ray2023nosignaling, testing_quantum_gravity, Stamp_2012_correlated_worldline_1,Stamp_2015_argument_correlated_worldline, barvinsky_structure_worldline, barvinsky_correlated_worldline_2, gerow_propogator_correlated_worldline, oppenheim2021postquantum}. Some key insights of this particular no-go theorem from local-applicability are that neither varying system dimension over time nor considering discrete time-evolution will help to evade linearity. With regards to continuous global time-parameters, we wonder if the results might therefore be relevent in the context of quantum gravity, where the the lack of a global time parameter and possible discreteness of spacetime arise \cite{isham1992, rovelli2014covariant, CHRISTODOULOU201964, Christodoulou2022experimenttotest, Surya_2019}. 

On the other hand, our results shed light on the structure of quantum theory itself, and its connection to the other major pillar of modern physics. Local applicability is motivated by relativity theory (including by the requirement in relativistic field theories that spacelike separated operators commute, and by the prohibition on superluminal signalling). Our results therefore tell us that if the states and measurements of quantum theory can be taken for granted, then the full quantum theory can be derived using a single relativistically-motivated physical postulate. This contributes to the longstanding research effort of reconstructing quantum theory, whether from diagrammatic \cite{Selby_2021_diagram}, relativistic \cite{Gisin:1989sx_OG_nosig_paper, Simon_2001_gisin_CPTP, Gisin1990WeinbergsNQ, Polchinski_signals, JORDAN1999263_signals},  operational \cite{Chiribella_2010_prob_pure, hardy2013reconstructing, Hardy2001QuantumAxioms, Masanes_2019, Galley2018anymodificationof, Galley2017classificationofall, Torre_PhysRevLett.109.090403, masanes_derivation, Chiribella_2018}, categorical \cite{Huot_2019_universal, Coecke2008AxiomaticCPM-construction, heunen_axioms, tull2019categorical, Wetering_2019, gogioso2019processtheoretic}, or computational \cite{Popescu1994-POPQNA,Westerbaan_2022, Bao_2016}  considerations.

\section{State-measurement Theories} \label{sec:state-measurement}

We begin by introducing the concept of a state-measurement theory.


\subsection{Basic Structure}

Take away the unitary dynamics from ``pure'' quantum theory, and what is one left with? One still has the idea that systems are represented with Hilbert spaces. One still has the idea that states are represented by the normalized vectors. And one still has the idea that measurement outcomes are represented by projectors. In addition, one has the Born rule for the probabilities of measurement outcomes, and the projection postulate for calculating the state the system is left in after it is measured. Altogether, this constitutes an example of what we will call a \textit{state-measurement} (SM) \textit{theory}.

More generally, a state-measurement theory does two things. Given any state and possible measurment outcome, it determines (1) the probability of getting that outcome, and (2) the state of the system after that outcome is obtained. That is, it provides both a probability rule and a state-update rule. These ideas are illustrated in Figure \ref{fig:state_measurement}, and formalized below.

\begin{figure} 
    \centering
    \[   \tikzfigscale{0.8}{figs/basic_update_b}  \] 
    \caption{The basic idea of a state-measurement theory. The theory allows one to calculate the probability of obtaining a measurement outcome $\mathbb{m}$ given the initial state $s$, as well as the updated state $s'$ after the measurement.
    \label{fig:state_measurement}}
\end{figure}

\begin{definition} \label{def:state_measurement_theory}
A state-measurement theory consists of a set of system $O=\{A, B, \ldots \}$, and for each system $A$
\begin{itemize}
    \item a set of states $S_A$;
    \item a set of (representatives of) measurement outcomes $M_A$;
    \item a probability function $p_A:S_A \times M_A \rightarrow [0,1]$;
    \item a partial function $u_A: S_A \times M_A \rightarrow S_A$ expressing the updated description of a state after a measurement outcome is obtained; and finally
    \item a `null' measurement $\mathbb{I}_A$ satisfying $u_A(s,\mathbb{I}_A) = s$ and $p_A(s,\mathbb{I}_A) = 1$ for all $s \in S_A$.
\end{itemize}
\end{definition} 

Let us now formalize our motivating example of a state-measurement theory, pure quantum SM theory. Throughout the paper, we restrict for simplicity to the case of finite-dimensional quantum theory.

\begin{example}[Pure Quantum SM Theory]
In pure quantum SM theory each system $A$ is a finite-dimensional Hilbert space. The states of $A$ are the vectors in $A$ with norm $1$. The set $M_A$ of measurements on $A$ is given by the set $\Pi(A)$ of orthogonal projectors on $A$. The probability rule is given by \[p(\ket{\psi},\pi) := \bra{\psi} \pi \ket{\psi},\] the update is given by \[    u(\ket{\psi}, \pi) := \frac{\pi \ket{\psi}}{\sqrt{p(\ket{\psi}, \pi)}},    \] and the nothing measurement is given by the identity map which is indeed an orthogonal projector. 
\end{example}


With this example in hand, we make two clarifications regarding Definition \ref{def:state_measurement_theory}. Firstly, outcomes in $M_A$ can be thought of as independent of the measurement context. For example, in pure quantum SM theory, $M_A$ just includes a single copy of each projector, rather than a different copy for each projector-valued measurement in which it is included. Secondly, $u_A$ is only required to be a partial function because for a given state in $S_A$, not all outcomes are possible. For example, if the system $A$ is prepared in the state $\ket{0}_A$, the $\ket{1}\bra{1}_A$ outcome is never obtained, and so pure quantum SM theory is not obligated to prescribe the state of the system following that outcome.


\subsection{Spatial Composition}


A general state-measurement theory needn't have any notion of \textit{joint} systems comprised of multiple subsystems. In particular, there might not be any way of considering combinations of states or measurements that are performed at the same time (see Figure \ref{fig:joint}). Since such notions are a prerequisite for discussions of locality, let us now define the special class of state-measurement theories that have them.

\begin{figure}
    \centering
    $\tikzfigscale{0.8}{figs/space_state_2} \ \  \ \ \ \ \ \ \ \ \ \ \ \ \ \tikzfigscale{0.8}{figs/space_update_a}$
    \caption{The intuition behind spatial state-measurement theories. Particularly if systems are objects in space, one should be able to consider combinations of those systems along with their states and their measurements. The former diagram expresses the notion that any two states $s$ and $r$ can be combined via tensor product to produce a joint state (on which any measurement outcome of the joint system can be observed). The latter diagram expresses the notion that any two measurement outcomes $\mathbb{m}$ and $\mathbb{n}$ can be combined in parallel to produce a joint measurement outcome (which can be observed on states of the joint system).
    }
    \label{fig:joint}
\end{figure}

\begin{definition}
    A spatial state-measurement theory is a state-measurement theory equipped with
    \begin{itemize}
        \item An associative function $\otimes: O \times O \rightarrow O$, meaning $(A \otimes B) \otimes C = A \otimes (B \otimes C)$
        \item For each $A,B$ an associative family of functions $\otimes_{AB}: S_A \times S_B \rightarrow S_{A \otimes B}$, meaning that $(s \otimes r) \otimes t = s \otimes (r \otimes t)$
        \item For each $A,B$ an associative family of functions $\otimes: M_A \times M_B \rightarrow M_{A \otimes B}$, meaning that $(\mathbb{m} \otimes \mathbb{n}) \otimes \mathbb{o} = \mathbb{m} \otimes (\mathbb{n} \otimes \mathbb{o})$
    \end{itemize}
\end{definition}

When it aids readability, we will adopt the convention of writing tensor products of systems without the symbol $\otimes$, for instance denoting $A \otimes B$ simply by $AB$.

Pure quantum SM theory is an example of a spatial state-measurement theory, where the composition function $\otimes$ is the usual tensor product.


\subsection{Locally Applicable Transformations}

We are finally in a position to define local applicability. A transformation on the system $A$ should certainly give rise to a function on the states of $A$, if the transformation is locally applicable, one should also be able to consider its action on $A \otimes X$, where $X$ is some distant environment, and find that it `only acts on $A$'. This idea is formalized below.

\begin{definition} \label{def:local_applicability}
Let $A,B$ be systems of a spatial state-measurement theory. A locally applicable transformation of type $\mathscr{L}:A \rightarrow B$ is a function $\mathscr{L}_{[-]}:S_A \rightarrow S_B$ and a family $\mathscr{L}_X : S_{AX} \rightarrow S_{BX}$ of functions such that
\begin{itemize}
\item State locality. Parallel composition commutes with the action of $\mathscr{L}$. {Formally, for all $X, X' \in O$, $s \in S_{AX}$ and $r \in S_{X'}$, we have $\mathscr{L}_{XX'}(s \otimes r) = \mathscr{L}_{X}(s) \otimes r$.}

    \item No signaling. For every state $s \in S_{AX}$ and measurement $ \mathbb{m} \in M_X$, the probability of $\mathbb{m}$ is not affected by $\mathscr{L}$. Formally, $p_{BX}(\mathscr{L}_{X}(s) , \mathbb{I} \otimes \mathbb{m} )= p_{AX}(s , \mathbb{I} \otimes \mathbb{m} )$.
    \item Update commutativity. It does not matter whether one first acts locally on $A$ and then obtains a measurement outcome on the environment, or the other way round.  Formally, for every state $s \in S_{AX}$ and measurement outcome $\mathbb{m} \in M_X$, $u_{BX}(\mathscr{L}_{X}(s) , \mathbb{I} \otimes \mathbb{m} )= \mathscr{L}_{X}( u_{AX}(s , \mathbb{I} \otimes \mathbb{m} ))$. 
\end{itemize}
\end{definition}

\begin{figure}
    \centering
    $ p \  \tikzfigscale{0.8}{figs/locality_pic_1_b} \ = \ \ \   p \ \tikzfigscale{0.8}{figs/locality_prob_2_b}$
    \caption{Intuition for no-signalling. The probability of some outcome $\mathbb{m}$ after measuring the environment is independent of whether or not the locally applicable transformation $\mathscr{L}$ is performed.}
    \label{fig:my_label}
\end{figure}

\begin{figure}
    \centering
    $ \tikzfigscale{0.8}{figs/locality_update_1_b} \ = \ \ \   \tikzfigscale{0.8}{figs/locality_update_2_b}$
    \caption{Intuition for update commutativity. It does not matter whether one first applies $\mathscr{L}$ and then obtains an outcome or the other way round.}
    \label{fig:my_label}
\end{figure}

Our formalization of a transformation might seem a little cumbersome. On the more familiar way of thinking about transformations in pure quantum theory, a unitary transformation on $A$ can be represented simply by an operator $U$ on its Hilbert space. There is no need to define the transformation as a family functions acting on larger systems, since to apply $U$ to a larger system $A \otimes X$ one can simply take the tensor product with the identity $U \otimes I_X$. Can we not simply define a locally applicable transformation $\mathscr{L}$ as a function on $A$ alone, and assume that the action of $\mathscr{L}$ an a larger system is given by an expression such as $\mathscr{L} \otimes I_X$?

The reason that we cannot do this is that the tensor product is an operation defined on linear maps. Thus the expression $\mathscr{L} \otimes I_X$ presupposes that $\mathscr{L}$ is linear. But we do not want to assume that locally applicable transformations are linear; instead, we want to \textit{derive} that they are. Similarly, we wish to derive rather than assume that $\mathscr{L}_X = \mathscr{L}_{[-]} \otimes I_X$ for all $X \in O$.

\section{Deriving quantum dynamics} \label{sec:reconstruction}

In this section, we derive the familiar quantum dynamics from the postulate of local applicability and the usual treatment of quantum states and measurements.

\subsection{Pure quantum dynamics}

Let us start by explicitly describing the locally applicable transformation on pure quantum theory. It is helpful to introduce the notation $\mathscr{L}_X (\ket{\psi} )^\dag$ for the bra corresponding to the ket $\mathscr{L}_X (\ket{\psi} )$. That is, if $\mathscr{L}_X (\ket{\psi} ) = \ket{\psi'}$, then $\mathscr{L}_X (\ket{\psi} )^\dag = \bra{\psi'}$.

\begin{example}[Locally Applicable Transformations on Pure Quantum SM Theory]
    Let $A,B$ be Hilbert spaces, a locally applicable transformation on pure quantum SM theory of type $A \rightarrow B$ is a function $\mathscr{L}_{[-]}:S_A \rightarrow S_B$ family of functions $\mathscr{L}_X: S_{A X} \rightarrow S_{BX}$ satisfying each of the following.
    \begin{itemize}
    \item State locality:  \[ \mathscr{L}_{XX'}(\ket{\psi} \otimes \ket{\phi} ) = \mathscr{L}_{X}(
\ket{\psi}
    ) \otimes \ket{\phi}, \quad \quad \mathscr{L}_{X}(\ket{\psi}  \otimes \ket{\phi} ) = \mathscr{L}_{[-]}(\ket{\psi} ) \otimes \ket{\phi}  \]
        \item No signaling: \[         \mathscr{L}_X (\ket{\psi} )^\dag (I \otimes \pi) \mathscr{L}_X (\ket{\psi} )  = \bra{ \psi}  I \otimes \pi \ket{\psi } \]
        \item Update commutativity: \[      \frac{(I \otimes \pi) \mathscr{L}_X (\ket{\psi} )}{\sqrt{p(\mathscr{L}_X (\ket{\psi} ), (I \otimes \pi))}} = \mathscr{L}_{X}\Bigg(  \frac{ (I \otimes \pi) \ket{\psi} }{\sqrt{p(\ket{\psi} , I \otimes \pi)}} \Bigg) , \] where we have defined $p(\ket{\psi}  , \pi ):= \bra{ \psi} \pi \ket{\psi } $
    \end{itemize}
\end{example}
Note that using the second bullet point in combination with the third returns \[       \frac{(I \otimes \pi) \mathscr{L}_X (\ket{\psi} )}{\sqrt{p(\ket{\psi} , I \otimes \pi)}} = \mathscr{L}_{X}  \Bigg(\frac{ (I \otimes \pi) \ket{\psi} }{\sqrt{p(\ket{\psi} , I \otimes \pi)}} \Bigg)  .    \] 

Our first result shows that the locally applicable transformations on pure quantum SM theory are precisely the isometries. Of course, it follows that when the input and output systems have the same dimension, they are unitaries.

\begin{theorem} \label{thm:unitary}
    $\mathscr{L}:A \rightarrow B$ is a locally applicable transformation on pure quantum theory if and only if there exists an isometric operator $U: A \rightarrow B$ such that $\mathscr{L}_X = U \otimes I_X$ for all $X \in O$.\footnote{Strictly speaking, this theorem tells us that that $\mathscr{L}_X$ is the function obtained by restricting the domain of $U \otimes I_X$ to normalized vectors. The equality $\mathscr{L}_X = U \otimes I_X$ should be read as stating that $\mathscr{L}_X \ket{\psi}_{AX}= U \otimes I_X \ket{\psi}_{AX}$ for all $\ket{\psi}_{AX} \in S_{AX}$.}
\end{theorem}

To prove this theorem, we rely on the following lemma, which tells us that any isometry defines a locally appplicable transformation.

\begin{lemma}
\label{lemma:unitary}
Let $U:A \rightarrow B$ be an isometric linear map between Hilbert spaces. The family of functions $\mathscr{L}^{U}_X$ defined by \[  \mathscr{L}^{U}_X(\ket{\psi}) :=  (U \otimes I_{X}) \ket{\psi},   \] is a locally-applicable transformation on pure quantum SM theory. 
\end{lemma}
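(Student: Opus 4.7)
The plan is to verify the three defining properties of a locally applicable transformation in turn, each of which reduces to a short identity using bilinearity of the tensor product and the defining property $U^\dagger U = I_A$ of a unitary (really, isometric) linear map. Since $U$ itself is a linear map, almost every step will simply be pushing tensor factors through each other.

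For state locality, I would compute directly: for $\psi \in S_A$ and $\phi \in S_X$, $\mathscr{L}^U_X(\psi \otimes \phi) = (U \otimes I_X)(\ket{\psi}\otimes\ket{\phi}) = (U\ket{\psi})\otimes\ket{\phi} = \mathscr{L}^U_{[-]}(\psi)\otimes \phi$, and analogously for $\psi \in S_{AX}$, $\phi \in S_{X'}$ we get $(U \otimes I_{XX'})(\ket{\psi}\otimes\ket{\phi}) = \mathscr{L}^U_X(\psi) \otimes \phi$, using associativity of $\otimes$ on vectors.

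For the no-signalling condition, I would write
\[
\bra{\mathscr{L}^U_X(\psi)}\,(I_B \otimes \pi)\,\ket{\mathscr{L}^U_X(\psi)} = \bra{\psi}\,(U^\dagger \otimes I_X)(I_B \otimes \pi)(U \otimes I_X)\,\ket{\psi} = \bra{\psi}\,(U^\dagger U \otimes \pi)\,\ket{\psi} = \bra{\psi}\,(I_A \otimes \pi)\,\ket{\psi},
\]
where the last equality invokes $U^\dagger U = I_A$. This is the only place unitarity (isometry) really enters.

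For update commutativity, the calculation is
\[
(I_B \otimes \mathbb{m})\ket{\mathscr{L}^U_X(\psi)} = (I_B \otimes \mathbb{m})(U \otimes I_X)\ket{\psi} = (U \otimes I_X)(I_A \otimes \mathbb{m})\ket{\psi} = \mathscr{L}^U_X\bigl((I_A \otimes \mathbb{m})\ket{\psi}\bigr),
\]
since $U$ commutes with $I \otimes \mathbb{m}$ through the tensor. Dividing both sides by $\sqrt{p(\psi,I\otimes \mathbb{m})}$ and using the already-established probability equality $p(\mathscr{L}^U_X(\psi),I\otimes \mathbb{m}) = p(\psi,I\otimes\mathbb{m})$ yields the normalized form in the definition. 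There is no real obstacle here; the only subtlety to note is that one must first prove the no-signalling clause so that the square roots appearing in the update clause can be identified.
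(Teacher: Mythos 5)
Your proposal is correct and follows essentially the same route as the paper's own proof: state locality via bilinearity of the tensor product, no-signalling via $U^\dagger U = I$, and update commutativity by commuting $U \otimes I$ past $I \otimes \mathbb{m}$, with the same observation that the no-signalling clause lets you identify the normalizing square roots. Nothing is missing.
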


Lemma \ref{lemma:unitary} is proven in Appendix \ref{}. We now prove Theorem \ref{thm:unitary}. 

\begin{proof}
The ``if'' direction follows from Lemma \ref{lemma:unitary}. For the ``only if'' direction, consider an arbitrary system $X \in O$, and an arbitrary state $\ket{\psi}_{AX} \in S_{AX}$. Let $A'$ and $A''$ be duplicates of $A$, and $X'$ and $X''$ be duplicates of $X$. Consider the following teleportation-like experiment.
    \begin{enumerate}
        \item 
        Prepare the six-partite system $A \otimes X\otimes A' \otimes X'\otimes A''\otimes X'' $ in the state  \[\mathscr{L}_{XA'X'}(\ket{\Phi^+}_{AA'} \otimes \ket{\Phi^+}_{XX'}) \otimes \ket{\psi}_{A''X''}\].
        \item Measure $A' \otimes X' \otimes A'' \otimes X''$ in a basis that includes $\ket{\Phi}^+_{A'A''}\ket{\Phi}^+_{X'X''}$.
    \end{enumerate}
    The probability of obtaining the $\ket{\Phi}^+_{A'X'}\ket{\Phi}^+_{A''X''}$ outcome is $\frac{1}{d_A^2d_X^2}$.
    We will show that if this outcome is obtained, then one will be left with the state $\mathscr{L}_{X}(\ket{\psi}_{AX})$ on $A \otimes X$. For this, we need to show that
    \begin{equation} \label{eq:linear}
        \mathscr{L}_{X}(\ket{\psi}_{AX}) = L\ket{\psi}_{A''X''} 
    \end{equation}
    where $L: A'' \otimes X''\rightarrow B \otimes X$ is the linear operator defined by
    \begin{equation}
        L := d_A d_X (I_{AX} \otimes \bra{\Phi^+}_{A'A''} \otimes \bra{\Phi^+}_{X'X''}) \mathscr{L}_{XA'X'}(\ket{\Phi^+}_{AA'} \otimes \ket{\Phi^+}_{XX'}) \otimes I_{A''X''}.
    \end{equation}
 
To begin with, state locality allows us to rewrite
    \begin{equation} 
        \begin{split}
            L \ket{\psi}_{A''X''} &= d_Ad_X (I_{AX} \otimes \bra{\Phi^+}_{A'A''} \otimes \bra{\Phi^+}_{X'X''}) \mathscr{L}_{XA'X'A''X''}(\ket{\Phi^+}_{AA'} \otimes \ket{\Phi^+}_{XX'} \otimes \ket{\psi}_{A''X''}) 
        \end{split}
    \end{equation}
    No signalling implies that the probability of obtaining the $\ket{\Phi}^+_{A'X'}\ket{\Phi}^+_{A''X''}$ outcome from a measurement of $\mathscr{L}_{XA'X'A''X''}(\ket{\Phi^+}_{AA'} \otimes \ket{\Phi^+}_{XX'} \otimes \ket{\psi}_{A''X''})$ is also $\frac{1}{d_A^2d_X^2}$. Update commutativity (and the fact that $\bra{\Phi^+} = \bra{\Phi^+} \ket{\Phi^+} \bra{\Phi^+}$) then implies that 
\begin{equation}
        \begin{split}
            L \ket{\psi}_{A''X''} &= (I_{AX} \otimes \bra{\Phi^+}_{A'A''} \otimes \bra{\Phi^+}_{X'X''})  \\&   \mathscr{L}_{XA'X'A''X''} \Bigg( d_A d_X (I_{AX} \otimes \ket{\Phi^+} \bra{\Phi^+}_{A'A''} \otimes \ket{\Phi^+} \bra{\Phi^+}_{X'X''})\\&   (\ket{\Phi^+}_{AA'} \otimes \ket{\Phi^+}_{XX'} \otimes \ket{\psi}_{A''X''}) \Bigg) \\ 
            &= (I_{AX} \otimes \bra{\Phi^+}_{A'A''} \otimes \bra{\Phi^+}_{X'X''})\mathscr{L}_{XA'X'A''X''}(\ket{\psi}_{AX} \otimes \ket{\Phi^+}_{A'A''} \otimes \ket{\Phi^+}_{X'X''}) 
        \end{split}
    \end{equation}
    A second application of state locality gives us equation (\ref{eq:linear}). 
    The proof of Eq.\ \ref{fig:proof} is summarized by Figure \ref{fig:proof}.
    \begin{figure} 
        \centering 
        \tikzfig{figs/big_proof}
         \caption{The proof of Eq.\ \ref{eq:linear}. Equalities are decorated with the parts of local applicability from which they follow; undecorated equalities follow from basic linear algebra. Wire labels are omitted where it aids readability.}
        \label{fig:proof}
    \end{figure}

    It follows that $\mathscr{L}_X$ itself is a linear operator (or, more precisely, it is the function obtained by restricting the domain of a linear operator to the normalized vectors in the Hilbert space $A \otimes X$.) To show that $\mathscr{L}_X$ is an isometry, it suffices to note that since all locally applicable transformations on pure quantum theory must map normalized states to normalized states
    \begin{equation}
        \mathscr{L}_X (\ket{\psi})^\dag \mathscr{L}_X (\ket{\psi}) = 1
    \end{equation}
    for all $\ket{\psi}$. 
    
    All of the above holds for an arbitrary system $X$. This includes the case where $d_X=1$. Hence $\mathscr{L}_{[-]}$ is an isometry. Finally, state locality implies that for any $X \in O$, we have $\mathscr{L}_X =  \mathscr{L}_{[-]}\otimes I_X$.
\end{proof}

In the context of pure quantum SM theory, locally applicable transformations and isometric operators are in one-one correspondence. We note that this one-one correspondence respects the sequential composition of transformations: the locally applicable transformation $\mathscr{L}^{U_2 \circ U_1}$ corresponding to the isometry $U_2 \circ U_1$ is the one obtained by applying $\mathscr{L}^{U_2}$ after $\mathscr{L}^{U_1}$, i.e.\ $\mathscr{L}^{U_2 \circ U_1}= \mathscr{L}^{U_2} \circ \mathscr{L}^{U_1}$. Similarly, $U^\mathscr{L}= U^{\mathscr{L}_2 \circ \mathscr{L}_1}$. It follows that not only are the locally applicable transformations and isometric operators in one-one correspondence, but they also \textit{define equivalent categories}.

What this shows is that the full pure quantum theory can be derived from (1) its treatment of states and measurements, and (2) local applicability. Put another way, if we assume that states and measurements are treated in the usual way, then the postulate that transformations are locally applicable is equivalent to the postulate that they are isometric.

\subsection{Mixed Quantum Dynamics}

Only closed quantum systems evolve unitarily --- open ones evolve via \textit{quantum channels}. Our next result says that quantum channels are the only locally applicable transformations on a state-measurement theory for open quantum systems. This state-measurement theory is defined as follows.

\begin{example}[Mixed quantum SM theory]
    In mixed quantum SM theory, each system $A$ is a finite-dimensional Hilbert space. The set $S_A$ of states of the system $A$ is the set $D(A)$ of density operators (i.e. the trace-one positive operators) on $A$. The set $M_A$ of measurement outcomes is the set $\Pi(A)$ of projectors on $A$; the nothing outcome is represented by the identity operator. The probability rule $p:D(A) \times \Pi(A) \rightarrow [0,1]$ is given by \[p(\rho, \pi) := {\rm Tr}(\rho \pi),\]
    and the update rule $u:D(A) \times \Pi(A) \rightarrow D(A)$ is given by \[u(\rho, \pi) = \frac{\pi \rho \pi}{p(\rho, \pi)}.\] This is a spatial state-update theory where parallel composition on both states and measurements is given by the tensor product.
\end{example}

More generally, one could include arbitrary positive-operator valued measurements (POVMs) in the state-measurement theory, or, even more generally, arbitrary quantum instruments. In these cases, the following lemma and theorems still hold. Let us now describe the locally applicable transformations on mixed quantum SM theory.

\begin{example}[Locally applicable transformations on static mixed quantum SM theory]
    Let $A,B$ be Hilbert spaces. A locally applicable transformation on mixed quantum SM theory of type $A \rightarrow B$ is a family of functions $L_X: D(A \otimes X) \rightarrow D(B \otimes X)$ such that:
    \begin{itemize}
    \item Parallel composition commutes with the action of $\mathscr{L}$, meaning  \[ \mathscr{L}_{XX'}(\rho_{AX} \otimes \sigma_{X'}) = \mathscr{L}_{X}(\rho_{AX}) \otimes \sigma_{X'} , \quad \quad \mathscr{L}_{X}(\rho_{A} \otimes \sigma_{X}) = \mathscr{L}_{[-]}(\rho_{A}) \otimes \sigma_{X} \]
        \item Probabilities of measurements on auxiliary systems are untouched by $\mathscr{L}$, meaning \[  {\rm Tr}(\mathscr{L}_{X}(\rho) (I \otimes \pi_{X})) =  {\rm Tr}(\rho (I \otimes \pi_{X}))  \]
        \item The update due to measurements on auxiliary systems are untouched by $\mathscr{L}$, meaning \[      \frac{(I \otimes \pi) \mathscr{L}_X (\rho) (I \otimes \pi)} {p(\mathscr{L}_X (\rho), I \otimes \pi)} = \mathscr{L}_{X}\Bigg(  \frac{ (I \otimes \pi) \rho (I \otimes \pi) }{p(\rho, I \otimes \pi)} \Bigg) , \] where we recall that we have defined $p(\rho , \pi ):= {\rm Tr}(\rho \pi) $.
    \end{itemize}
\end{example}
Again using the second and third bullet points together gives \[      \frac{(I \otimes \pi) \mathscr{L}_X (\rho) (I \otimes \pi)} {p(\rho, I \otimes \pi)} = \mathscr{L}_{X}\Bigg(  \frac{ (I \otimes \pi) \rho (I \otimes \pi) }{p(\rho, I \otimes \pi)} \Bigg) . \] Note that again, locally applicable transformations are only ever defined on the most easily physically interpretable positive linear operators, those which are normalised.

\begin{theorem} \label{thm:channel}
    $\mathscr{L}:A \rightarrow B$ is a locally applicable transformation on mixed quantum theory if and only if there exists a quantum channel $\mathcal{E}: A \rightarrow B$ such that $\mathscr{L}_X = \mathcal{E} \otimes I_X$ for all $X \in O$.\footnote{Again, strictly speaking, this theorem tells us that that $\mathscr{L}_X$ is the function obtained by restricting the domain of $\mathscr{E} \otimes I_X$ to trace-one density operators.}
\end{theorem}
\begin{proof}

    The proof is almost identical to the proof given for the pure case. 
    To prove that any quantum channel can be used to construct a locally-applicable transformation is straightforward, given for completeness in Lemma $2$ of the appendix. 
    Now let us see that this construction has an inverse, i.e that every locally-applicable transformation can be represented by a quantum channel. We adopt a ket/bra-like notation of completely positive maps between density operators. We use $\kett{\rho}$ to represent a density matrix and use $\braa{\sigma}$ to represent the effect associated to a density operator $\sigma$, that is, the quantum channel $\braa{\sigma}: H \rightarrow \mathbb{C}$ given by $\braa{\sigma}(\rho) = Tr[ \sigma \rho ] $. We denote $Tr[ \sigma \rho ]$ more cleanly in analogy to bra/ket notation as $\braa{\sigma} \kett{\rho}$, and we denote for some general channel $\mathcal{E}$ it's action on a density operator $\rho$ as simply $\mathcal{E}\kett{\rho}$. 
    
    One considers an arbitrary state $\rho_{AX}$, and sets up a teleportation-like experiment almost identical to the pure-case. More precisely, we consider the state  \[\mathscr{L}_{XA'X'}(\kett{\Phi^+}_{AA'} \otimes \kett{\Phi^+}_{XX'}) \otimes \kett{\rho}_{A''X''}.\] 
    and show that
    \begin{equation} \label{eq:linear}
        \mathscr{L}_{X}(\kett{\rho_{AX}}) = \mathcal{E}\kett{\rho}_{A''X''}
    \end{equation}
    where $\mathcal{E}: A'' \otimes X''\rightarrow B \otimes X$ is the quantum channel defined by
    \begin{equation}
        \mathcal{E} := d_A^2 d_X^2 (\mathcal{I}_{AX} \otimes \braa{\Phi^+}_{A'A''} \otimes \braa{\Phi^+}_{X'X''}) \mathscr{L}_{XA'X'}(\kett{\Phi^+}_{AA'} \otimes \kett{\Phi^+}_{XX'}) \otimes \mathcal{I}_{A''X''}.
    \end{equation}
    The proof is then follows identically to the pure case. In fact, the proof can be summarized diagrammatically using the same syntactic steps in Figure \ref{fig:proof}, with only the semantic interpretation changed to the theory of completely positive maps.

    It follows that $\mathscr{L}_X$ itself is a (restriction of the domain of a) completely positive linear map $\mathcal{E}$. To show that $\mathcal{E}$ is furthermore trace-preserving and therefore a quantum channel, it suffices to note that since all locally applicable transformations are defined as functions with domain and codomain given by the trace-one density operators. Consequently we have that, 
    \begin{equation}
       \forall \rho : Tr[\rho] = 1 \ \text{then} \  \mathcal{E}(\rho) =  Tr[\mathscr{L}_X (\rho)] = 1,
    \end{equation}
    and so $\mathcal{E}$ is indeed trace-preserving. Finally, state locality implies that for any $X \in O$, we have $\mathscr{L}_X =  \mathscr{L}_{[-]}\otimes I_X$. 
\end{proof}

Consequently, there is a composition-preserving, one-to-one correspondence between the locally applicable transformations on mixed quantum SM theory and the quantum channels (that is, an equivalence of categories). 
The postulate of local applicability is logically equivalent to the postulate of complete positivity and trace preservation for the transformations of mixed quantum theory.

\section{Comparison with the Argument of Gisin} \label{sec:gisin}

In \cite{Gisin:1989sx_OG_nosig_paper}, Gisin addressed the question of whether quantum dynamics are a consequence of relativity. Defining the Schrödinger evolution as the rule that states evolve as
\begin{equation} \label{eq:Schrödinger}
    \ket{\psi(t)} = e^{-iHt/\hbar} \ket{\psi(0)}
\end{equation} 
for some Hamiltonian $H$, and assuming the projection postulate, the abstract of \cite{Gisin:1989sx_OG_nosig_paper} includes the following statement:
\begin{quote}
\textit{the Schrödinger evolution is the only quantum evolution that is deterministic and compatible with relativity.}
\end{quote}

While definitions of each individual term are not explicitly given, it seems reasonable to infer that  `quantum evolution' refers to something like a family of functions on quantum states parameterized by time\footnote{This is suggested by the presentation of the Schrödinger evolution, and later reference in the paper to QSDs, although when defining a deterministic evolution, a single function is studied.}; `deterministic' refers to a mapping from pure states to pure states\footnote{See the bottom of p.364, and p.366 where it is suggested that any nondeterministic evolution would map a pure state to a mixed state.}; and `compatible with relativity' means that the evolution does not lead to superluminal signaling.

When we try to explicitly bring the argument to its  full conclusion we will find that we need some additional assumption(s). Before getting there, let us first recap the argument as given in \cite{Gisin:1989sx_OG_nosig_paper}, which elegantly recovers convex linearity on density matrices from compatibility with relativity.

\subsection{A review of the argument}

By determinism, the dynamics are represented by a function $g$ on pure density operators. This function induces a mapping on probabilistic ensenbles of pure density operators, 
\begin{equation}
    \{p_x, \ket{\psi^{(x)}}\bra{\psi^{(x)}}\} \quad \mapsto \quad \{p_x, g(\ket{\psi^{(x)}}\bra{\psi^{(x)}})\}.
\end{equation}
Now let us introduce some terminology. $g$ is called \textit{convex linear} if
\begin{equation} \label{eq:convexlinear}
\begin{split}
        \sum_x  p_x \ket{\psi^{(x)}}\bra{\psi^{(x)}} 
        &= \sum_x q_x\ket{\phi^{(x)}} \bra{\phi^{(x)}} \\ \implies \sum_x p_x g(\ket{\psi^{(x)}}\bra{\psi^{(x)}}) &= \sum_x q_x g(\ket{\phi^{(x)}} \bra{\phi^{(x)}}) 
\end{split}
\end{equation}
for any two probabilistic ensembles of $\{p_x, \ket{\psi^{(x)}}\bra{\psi^{(x)}}\}$ and $\{p_x, \ket{\phi^{(x)}}\bra{\phi^{(x)}}\}$. A pair of ensembles satisfying $ \sum_x p_x \ket{\psi^{(x)}}\bra{\psi^{(x)}}  = \sum_x \ket{\phi^{(x)}}\bra{\phi^{(x)}}$ are called \textit{indistinguishable}. 

Suppose that $g$ is not convex linear, i.e.\ does not satisfy (\ref{eq:convexlinear}). Then there exists an indistinguishable pair of ensembles $(e_1, e_2)$ that are mapped to a distinguishable pair $(e_1', e_2')$. Assuming that quantum states and measurements are treated in the usual way, we can now deploy a useful lemma from \cite{Gisin:1989sx_OG_nosig_paper}. This lemma shows that for any indistinguishable $(e_1, e_2)$, one can devise a situation in which Alice chooses between two measurements, and, in doing so, chooses whether Bob's spacelike separated system is prepared in $e_1$ or $e_2$. If Bob subsequently implements $g$, then Alice effectively chooses between $e_1'$ or $e_2'$. But then the distinguishability of $(e_1', e_2')$ implies that Alice can send signals to Bob. In short, the usual treatment of states and measurements in pure quantum SM theory implies that any convex nonlinear dynamical function leads to superluminal signaling.

After this derivation of convex linearity, a further argument is not given (or cited) as to why the Schrödinger evolution (\ref{eq:Schrödinger}) is the only quantum evolution that is deterministic and compatible with relativity. 


\subsection{Non-linear functions with no-superluminal signalling}

In fact, this conclusion cannot be reached without additional assumptions. To see this, consider the following function, where $\ket{0}\bra{0}$ is some fixed state.
\begin{equation}
    g(\ket{
    \psi}\bra{
    \psi}) = \ket{0}\bra{0}
\end{equation}
$g$ is deterministic (in the sense of mapping pure states to pure) and compatible with relativity (in the sense of not leading to superluminal signalling). But it does not arise from any Schrödinger evolution.

Indeed, $g$ does not even act linearly on the Hilbert space -- there is no linear operator $L$ such that $g(\ket{\psi}\bra{\psi})=L\ket{\psi}\bra{\psi}L^\dag$ (see Appendix \ref{app:nonlinear}).
As the argument from \cite{Gisin:1989sx_OG_nosig_paper} shows, $g$ is \textit{convex linear}. But convex linearity does not imply linearity with respect to superpositions. This is why the determinism and no-signaling assumptions of \cite{Gisin:1989sx_OG_nosig_paper} do not get us all the way to unitary linear evolution without additional assumptions. 

With this in mind, and as noted in \cite{bassi2003dynamical}, one way to complete the argument is to invoke a theorem from \cite{davies1976quantum}. First note that any convex linear map $g$ on {pure density operators} defines a positive linear map on all density operators. Then
Theorem 3.4 of \cite{davies1976quantum} tells us that that any dynamical \textit{group} -- a semigroup where each element has an inverse -- of positive linear maps on density operators is generated by the Schrödinger evolution. So if we assume not only that transformations are deteminisitic and compatible with relativity theory, but also that they form a dynamical group, then the Schr\"odinger evolution can be derived. In short, 
\begin{quote}
Determinism + Compatibility with relativity + Dynamical groups $\implies$ Schrödinger evolution.
\end{quote}

The dynamical group assumption appears to bring three conceptual additions to the argument, the existence of a continuous time-parameter, the reversibility of transformations, and Markovianity.

Perhaps the reversibility assumption could be dropped: as far as we know, it is possible that the theorem above will still hold if we replace `groups' with `semigroups'. One could then even say that the quote form the start of this section is true as long as forming a dynamical semigroup is a thought of as a defining property of an evolution.
This would be an interesting and valuable result which would clarify the role of signalling constraints in deriving linearity, well worth either proving or disproving in future work.

If a proof can be given without reversibility, one still has to assume that one has a continuous time parameter, and, moreover, a dynamical semigroup, before one can derive linearity with respect to superpositions and unitarity\footnote{Note that this also means that it is unclear how a Gisin style argument could be used to derive the linearity of the (isometric) transformations between distinct Hilbert spaces, meaning there is a whole class of kinds of transformations of pure states who's linearity is so-far only derivable from local applicability}. {On the other hand, our derivation of the linearity and unitarity of pure quantum transformation makes no such assumptions. Even if time is fundamentally discrete, all locally applicable transformations on pure quantum SM theory are unitary. }


\subsection{On the Argument for Complete Positivity}
Another nice feature of the local applicability axiom is that it can be applied without tweaking to recover the linear unitary maps in the pure setting and the quantum channels in the mixed setting. In \cite{Simon_2001_gisin_CPTP}, it is suggested that a Gisin-style no superluminal signalling by dynamics of density matrices argument could be used to recover the quantum channels. One might wonder then, whether a Gisin-style argument, when assisted with dynamical group conditions, could be used in a similar way. In trying to do so we find there are two key challenges.

First, the argument goes as far as to show that any function from pure states to density matrices which satisfies no superluminal signalling can be extended to a convex linear map on the density matrices. Positivity is clear by construction, however, less clear is complete positivity. Indeed, from this point onwards, complete positivity is axiomatised in the usual way as an additional property for positive linear maps. It is unclear how to phrase such a requirement in a theory-independent way, without prior knowledge of linearity.

Second, without relaxation of the dynamical group requirement in the pure setting to a dynamical semi-group requirement, the assistance of reversibility used to complete the argument in the pure setting is then too strong in the mixed setting, ruling out all quantum channels which are not isometric.

\subsection{Summary of Comparison}
The approach of \cite{Gisin:1989sx_OG_nosig_paper} provides an enlightening and efficient derivation of convex linearity, something more needs to be said however before quantum linearity and unitarity are recovered. That `something more' at the very least prevents the argument from applying to discrete time evolution. Additionally, it seems to be not possible to put together the arguments of \cite{Gisin:1989sx_OG_nosig_paper, Simon_2001_gisin_CPTP}, in such a way as to give a simple one-size-fits-all axiom for transformations.

This highlights a couple of nice features of local applicability. First, it recovers linearity and unitarity without background assumptions of reversibility or any a particular notion of time. Second, it is a working axiom for quantum transformations, which works without tweaking simultaneously for the pure and mixed settings.

\section{Conclusion} 

On a standard approach to quantum theory, one starts by making the assumption that the dynamics of an isolated system can be represented by a linear and unitary operator on its Hilbert space. 
Then, one further assumes that the action of $U$ in the presence of an environment is given by $U \otimes I$. Once both these assumptions are made, it follows that $U$ is locally applicable. 

But the argument can be reversed  -- one can \textit{assume} local applicability and then 
\textit{derive} linearity, unitarity, and the role of $U \otimes I$ (or, in the mixed case, convex linearity, completely positivity and $\ce \otimes \ci$). This flipping of the script not only makes for a more intuitive formulation of quantum theory; it also throws light on its connection with the other pillar of modern physics, relativity theory.

In connecting locality with linearity, the results of this paper also connect locality with core foundational issues in quantum theory.
In particular, linearity with respect to superpositions is responsible for the Wigner's friend paradox \cite{wigner1995remarks} and the associated problems of quantum measurements. Theorem \ref{thm:unitary} thus reveals a sense in which locality leads to a measurement problem. This complements insights from recent no-go theorems \cite{healey2018quantum, bong2020strong, haddara2022possibilistic,  leegwater2022greenberger, ormrod2022no} stating that the combination of quantum theory and relativity (and more generally, Bell nonlocality, the preservation of information, and dynamical locality \cite{ormrod2023theories}) leads to a particularly acute sort of measurement problem, in which the observed outcomes and settings of measurements fail to be absolute. 

On the mathematical side, some connection appears to be forming between the foundations of physics and the Yoneda lemma, a fundamental theorem of abstract algebra and category theory \cite{Lane1971CategoriesMathematician}. This lemma states that families of functions (suitably well-behaved and called natural transformations) will always inherit the defining structural features of the mathematical theories they act on. In this paper, we have seen that whichever form of linearity was present in a state space (be it the superposition rules of Hilbert spaces or the mixing rules of operator spaces) was inherited by families of functions (suitably well-behaved and called locally-applicable transformations). This suggests the possibility of a version of the Yoneda lemma specially adapted for physical theories. 

In order to derive our results, we were led to introduce the notion of a \textit{state-measurement theory}. In the future, this very abstract framework could be used much more generally to study what dynamics are imposed on a theory either by local applicability or by other assumptions. Most obviously, one could attempt to extend our theorems to infinite-dimensional quantum state-measurement theories, or to state-measurement theories associated with other operational probabilistic theories \cite{Chiribella_2010_prob_pure}. One could also attempt an extension to theories with modified Born rules \cite{aaronson2004quantum, Bao_2016, born_causality, Galley2018anymodificationof, Galley2017classificationofall, Masanes_2019}. Finally, one could attempt to derive features of a given theory besides its dynamics, including the tensor product rule, the direct sum, and even the state-measurement rule itself. If local applicability is not enough for all of this, there is a natural question: what other assumption(s) will be?

It is expected in quantum geometries that local subsystems will be represented with more subtle mathematical tools than simple tensor products, such as restrictions \cite{arrighi2022quantum} and non-factor sub-algebras \cite{Giddings_2006,  Bianchi_2019, Giddings_2015, Chiribella_2018}. One could attempt to derive the dynamics in these contexts by considering something like local applicability, but defined in terms of \textit{direct sum} structures rather than (just) in terms of tensor products. Given the problems associated with time in quantum gravity (including its possible discreteness, the lack of a global time parameter, and superpositions of time), it is encouraging that our derivation of unitarity did not rely on having any time parameter (continuous or otherwise).

Finally, in very broad terms, this is not the first context in which local applicability has been used to generalise, and reconstruct, standard notions of transformation: \cite{wilson2023quantum, wilson2022free} did something similar in the more specialised study of quantum supermaps. This leads us to wonder in which other settings this simple physical principle can be formalised and used to reconstruct and generalise accepted notions of transformation in science and mathematics.

\section*{Acknowledgements} We are pleased to thank Kai-isaak Ellers, James Hefford, and Jonathan Barrett for useful conversations, and the organisers of the QISS Spring School 2023 for a thoroughly enjoyable week that stimulated the development of this paper.
This work was funded by the Engineering and Physical Sciences Research Council [grant number EP/W524335/1] and [grant number EP-T517823-1]. 

\bibliographystyle{utphys.bst}
\bibliography{locality_refs}

\appendix

\section{Using the Tensor Product to Construct Standard Transformations}
As mentioned in the main text, unitary linear maps always define locally-applicable transformations of pure states, and quantum channels always define locally-applicable transformations of mixed states. Let us begin by proving the former statement.


\begingroup
  \setcounter{lemma}{0} 
  \begin{lemma}[Unitary Linear Maps] \label{example:unitary}
    Let \( U: A \to B \) be an isometric linear map between Hilbert spaces, 
    a locally applicable transformation \( \mathscr{L}^{U}:A \rightarrow B \) 
    can be constructed on pure quantum SM theory by defining
    \[
      \mathscr{L}^{U}_X(\psi) :=  U \otimes I_{X} \ket{\psi}.
    \]
  \end{lemma}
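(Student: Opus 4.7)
The plan is to verify each of the three conditions in Definition \ref{def:local_applicability} directly, taking advantage of the bilinearity of the tensor product and the isometry property $U^{\dagger}U = I_A$ which holds whenever $U$ is unitary. Since the claimed transformation is literally $U \otimes I_X$, every defining property reduces to a short manipulation of linear algebra, so the proof should be short and mechanical rather than conceptual.

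First I would check state locality. By bilinearity of the tensor product, $(U \otimes I_{XX'})(\ket{\psi}\otimes \ket{\phi}) = (U\otimes I_X)\ket{\psi} \otimes \ket{\phi}$ whenever $\ket{\psi}\in S_{AX}$ and $\ket{\phi}\in S_{X'}$, which is exactly the first half of state locality; the second half is the same computation with the first tensor factor stripped down to $A$. Before moving on, I would also note that $\mathscr{L}^U_X$ actually maps $S_{AX}$ into $S_{BX}$, i.e.\ preserves normalisation, since $\langle U\otimes I_X\,\psi \,|\, U\otimes I_X\,\psi\rangle = \langle \psi |\, U^\dagger U \otimes I_X \,|\psi\rangle = \langle\psi|\psi\rangle = 1$. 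This is the one place the hypothesis that $U$ is unitary (rather than merely linear) enters the verification of the state-locality clause, and its analogues will also drive the next two clauses.

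Next I would verify no signaling. For any $\ket{\psi}\in S_{AX}$ and any projector $\pi \in M_X$,
\begin{equation}
\bra{\mathscr{L}^U_X(\psi)}\,(I_B \otimes \pi)\,\ket{\mathscr{L}^U_X(\psi)} = \bra{\psi}\,(U^\dagger U \otimes \pi)\,\ket{\psi} = \bra{\psi}\,(I_A \otimes \pi)\,\ket{\psi},
\end{equation}
using $U^\dagger U = I_A$. This is precisely the no-signaling condition. Finally, for update commutativity, the same isometric identity gives $p(\mathscr{L}^U_X(\psi), I_B \otimes \pi) = p(\psi, I_A \otimes \pi)$, while
\begin{equation}
(I_B \otimes \pi)\,\mathscr{L}^U_X(\psi) = (I_B \otimes \pi)(U \otimes I_X)\ket{\psi} = (U \otimes I_X)(I_A \otimes \pi)\ket{\psi} = \mathscr{L}^U_X\bigl((I_A\otimes \pi)\ket{\psi}\bigr),
\end{equation}
so dividing through by the common normalisation $\sqrt{p(\psi, I_A\otimes \pi)}$ gives the required identity, with the edge case $p = 0$ handled by the fact that $u$ is only a partial function.

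The only potential obstacle is making sure that the object we are defining really lands in the right set (normalised vectors), and that the computation handles the subsystem labelling consistently in the iterated case $\mathscr{L}^U_{XX'}$ versus $\mathscr{L}^U_X$; both concerns are dispatched by a single invocation of $U^\dagger U = I_A$ and the associativity of the tensor product. There is no genuinely hard step: the lemma is essentially a sanity check that the standard quantum-mechanical prescription indeed satisfies the abstract local-applicability axioms that will do the real work in Theorem \ref{thm:unitary}.
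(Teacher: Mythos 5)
Your proposal is correct and follows essentially the same route as the paper's own proof: verify state locality by bilinearity/associativity of the tensor product, no-signaling by $U^\dagger U = I_A$, and update commutativity by commuting $I\otimes\pi$ past $U\otimes I$ together with the already-established probability equality (the paper likewise folds the no-signaling identity into the update check so that both sides carry the same normalisation $\sqrt{p(\psi, I\otimes\mathbb{m})}$). The explicit remark that normalisation is preserved also matches the paper's opening observation.
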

\endgroup
\begin{proof} Clearly $\mathscr{L}_X$ maps normalized vectors to normalized vectors for all $X$. Now let us check that each part of local applicability holds for $\mathscr{L}^U$, beginning with state locality. For any $\ket{\psi} \in S_{A  X}$ and $\ket{\phi} \in S_{X'}$:
\begin{align*}
    \mathscr{L}^{U}_{XX'}(\ket{\psi} \otimes \ket{\phi})  & =   (U \otimes I_{XX'}) \ket{\psi} \otimes \ket{\phi} \\
    & = (U \otimes I_{X}) \ket{\psi} \otimes \ket{\phi} \\
    & = \mathscr{L}^U_{X}(\ket{\psi}) \otimes \ket{\phi}.
\end{align*} We now verify that no signaling is satisfied.
\begin{align*}
\mathscr{L}^U_X (\ket{\psi})^\dag  I \otimes \pi \ket{\mathscr{L}^U_X \psi} & =  \bra{\psi} (U^{\dagger} \otimes I) (I \otimes \pi) (U \otimes I) \ket{\psi} \\
 & = \bra{\psi} (I \otimes \pi)  (U^{\dagger} \otimes I) (U \otimes I) \ket{\psi} \\
 & = \bra{\psi}  (I \otimes \pi) \ket{\psi}
\end{align*} Finally, we check update commutativity.
\begin{align*}
 \frac{(I \otimes \pi) \ket{\mathscr{L}^U_X (\psi)}}{\sqrt{p(\ket{\psi}, I \otimes \pi)}} & = \frac{(I \otimes \pi) (U \otimes I) \ket{\psi}}{\sqrt{p(\ket{\psi}, I \otimes \pi)}} \\
  & = \frac{(U \otimes I) (I \otimes \pi) \ket{\psi}}{\sqrt{p(\ket{\psi}, I \otimes \pi)}} \\
  & = \mathscr{L}^U_{X}  \Bigg(\frac{ (I \otimes \pi) \ket{\psi} }{\sqrt{p(\ket{\psi}, I \otimes \pi)}} \Bigg).
 \end{align*}
\end{proof}

Let us now move on to the mixed setting. 
\begin{lemma}[Quantum Channels] \label{lem:unitary}
Let $\mathcal{E}:A \rightarrow B$ be a completely positive trace-preserving map between Hilbert spaces, a locally applicable transformation $\mathscr{L}_{\mathcal{E}}:A \rightarrow B$ can be constructed on mixed quantum theory by defining \[  \mathscr{L}^{\mathcal{E}}_X(\rho) :=  \mathcal{E} \otimes \mathcal{I}_{X} [\rho].   \] 
\end{lemma}
\begin{proof}
It is clear that whenever $\rho$ is normalised then so is $\mathscr{L}^{\mathcal{E}}(\rho)$. We now check state locality.
\begin{align*}
    \mathscr{L}^{\mathcal{E}}_{XX'}(\rho_{AX} \otimes \sigma_{X'})  & =   (\mathcal{E} \otimes \mathcal{I}_{XX'}) [\rho_{AX} \otimes \sigma_{X'}] \\
    & = (\mathcal{E} \otimes \mathcal{I}_{X}) [\rho_{AX}] \otimes \sigma_{X'}  \\
    & = \mathscr{L}^{\mathcal{E}}_{X}(\rho_{AX}) \otimes \sigma_{X'}.
\end{align*} Next, we confirm that no signalling holds.
\begin{align*}
{\rm Tr}(\mathscr{L}^{\mathcal{E}}_{X}(\rho) (I \otimes \pi_{X})) & = {\rm Tr}((\mathcal{E} \otimes \mathcal{I}_{X})[\rho] (I \otimes \pi_{X}))  \\
& = {\rm Tr}(\rho (I \otimes \pi_{X})) 
\end{align*} Finally, we check update commutativity.
\begin{align*}
 \frac{(I \otimes \pi) \mathscr{L}^{\mathcal{E}}_X (\rho) (I \otimes \pi)} {p(\rho, I \otimes \pi)} & =  \frac{(I \otimes \pi) (\mathcal{E} \otimes \mathcal{I}_{X}) [\rho] (I \otimes \pi)} {p(\rho, I \otimes \pi)}  \\
 & = (\mathcal{E} \otimes \mathcal{I}_{X})\Bigg[ \frac{(I \otimes \pi)  \rho (I \otimes \pi)} {p(\rho, I \otimes \pi)} \Bigg] \\
& = \mathscr{L}^{\mathcal{E}}_{X}\Bigg(  \frac{ (I \otimes \pi) \rho (I \otimes \pi) }{p(\rho, I \otimes \pi)} \Bigg)
 \end{align*} This completes the proof. 
\end{proof}

\section{A nonlinear map that is deterministic and compatible with relativity} \label{app:nonlinear}

In Section \ref{sec:gisin}, we introduced a function $g(\ket{\psi}\bra{\psi}) := \ket{0}\bra{0}$ that is deterministic and compatible with relativity in Gisin's sense. Here, we show that this function acts nonlinearly on the Hilbert space. More precisely, we show that there is no linear operator $L$ the Hilbert space such that
\begin{equation} \label{eq:impossible}
    L\ket{\psi}\bra{\psi}L^\dag = \ket{0}\bra{0}
\end{equation}
for all states $\ket{\psi}$ and some specific state $\ket{0}$. 

Let us define $\ket{+}:= \frac{\ket{0}+\ket{1}}{\sqrt{2}}$ for some state $\ket{1}$ orthogonal to $\ket{0}$. If there were an operator satisfying (\ref{eq:impossible}), then it would have to satisfy
\begin{equation} \nonumber
\begin{split}
    \ket{0} \bra{0} &=   L\ket{+}\bra{+}L^\dag \\
    &= \frac{L\ket{0}\bra{0}L^\dag +L\ket{0}\bra{1}L^\dag +L\ket{1}\bra{0}L^\dag +L\ket{1}\bra{1}L^\dag}{2} \\
    &= \ket{0}\bra{0} + \frac{L\ket{0}\bra{1}L^\dag +L\ket{1}\bra{0}L^\dag}{2} 
\end{split}
\end{equation}
meaning that $\frac{L\ket{0}\bra{1}L^\dag +L\ket{1}\bra{0}L^\dag}{2} =0$. Defining $\ket{+i}:= \frac{\ket{0}+i\ket{1}}{\sqrt{2}}$ for some state $\ket{1}$, we also have
\begin{equation}  \nonumber
\begin{split}
    \ket{0} \bra{0} &=   L\ket{+i}\bra{+i}L^\dag \\
    &= \frac{L\ket{0}\bra{0}L^\dag -i L\ket{0}\bra{1}L^\dag + i L\ket{1}\bra{0}L^\dag +L\ket{1}\bra{1}L^\dag}{2} \\
    &= \ket{0}\bra{0} + i\frac{-L\ket{0}\bra{1}L^\dag +L\ket{1}\bra{0}L^\dag}{2} 
\end{split}
\end{equation}
and so $\frac{-L\ket{0}\bra{1}L^\dag +L\ket{1}\bra{0}L^\dag}{2} =0$. It follows that $\L\ket{0}\bra{1}L^\dag=0$, implying that either $L\ket{0}=0$ or $L\ket{1}=0$, and ultimately that either $L\ket{0}\bra{0}L^\dag=0$ or $L\ket{1}\bra{1}L^\dag=0$, in violation of (\ref{eq:impossible}).

\end{document}